\newfont{\nset}{msbm10}
\def\Pcum{{\cal{P}}_{\mbox{\scriptsize cum}}}
\newtheorem{theo}{Theorem}[section]
\newtheorem{proposition}[theo]{Proposition}
\newtheorem{lemma}[theo]{Lemma}
\newtheorem{definition}[theo]{Definition}
\journal{ Theor. Comput. Sci.  Published in  vol.  412 (2011) pp. 865-875. }
\begin{document}

\begin{frontmatter}

\title{Farey Graphs as  Models for Complex Networks
}

\author{Zhongzhi Zhang}
\address{School of Computer Science and  Shanghai Key Lab of Intelligent Information Processing,
      Fudan  University, Shanghai 200433, China  ({\tt zhangzz@fudan.edu.cn}).}
\author{Francesc Comellas}
\address{Dep. Matem\`atica Aplicada IV, EPSC,
     Universitat Polit\`ecnica de Catalunya, c/ Esteve Terradas 5, Castelldefels (Barcelona), Catalonia,
     Spain ({\tt comellas@ma4.upc.edu}).}


\begin{abstract}
Farey sequences of irreducible fractions between 0 and 1 can be related to graph constructions known as Farey graphs.
These graphs were first introduced by Matula and Kornerup in 1979 and further studied by Colbourn in 1982 and they have many interesting properties: they are  minimally 3-colorable, uniquely Hamiltonian, maximally outerplanar and perfect.
In this paper we introduce a simple generation method for a  Farey graph family, and we study analytically relevant topological properties: order, size, degree distribution and correlation, clustering, transitivity, diameter and average distance. We show that the graphs are a good model for networks associated with some complex systems.
\end{abstract}

\begin{keyword}
Farey graphs, small-world graphs, complex networks,  self-similar, outerplanar, exponential degree distribution, degree correlations
\end{keyword}


\end{frontmatter}


\section{Introduction}
A Farey sequence of order $n$ is the  sorted sequence of irreducible fractions between $0$ and $1$ with denominators less than or equal to $n$ and arranged in increasing values. Therefore, each Farey sequence starts with $0$ and ends with $1$, denoted by $\frac{0}{1}$ and $\frac{1}{1}$, respectively.
The Farey sequences of orders 1 to 4 are:  $F_1=\{{0\over 1},{1\over 1}\}$, $F_2=\{{0\over 1},{1\over 2},{1\over 1}\}$, $F_3=\{{0\over 1},{1\over 3},{1\over 2},{2\over 3},{1\over 1}\}$, $F_4=\{{0\over 1},{1\over 4},{1\over 3},{1\over 2},{2\over 3},{3\over 4},{1\over 1}\}$.

Farey sequences, which in some papers are incorrectly called Farey
series, can be constructed using mediants (the mediant of $a\over b$
and $c\over d$ is ${a+c}\over {b+d}$):  the Farey sequence of order
$n$ is obtained from the Farey sequence of order $n-1$ by computing
the mediant of each two consecutive values in the Farey sequence of
order $n-1$, keeping only the subset of mediants that have
denominator  $n$, and placing each mediant between the two values
from which it was computed. Note that neighboring fractions in a
sequence are unimodular, i.e.  if $p/q$ and $r/s$ are neighboring
fractions, then $rq- ps = 1$. It was John Farey who  in 1816
conjectured that new terms in $F_n$ could be  obtained  as  mediants
from two consecutive terms in $F_{n-1}$. Cauchy  proved the
conjecture and used the term Farey sequences for the first time.
However, Farey sequences were in fact introduced in 1802 by C.
Haros, see \cite{HaWr79}.

Farey sequences have many interesting properties, which we will not
review here  and we refer the interested reader to the abundant
literature on this topic, see~\cite{Weweb} and references therein.
As we are interested in some connections of these sequences with
graph theory, we first mention their relation with Farey trees
(which some authors call Farey graphs, but are a different structure
than the graphs studied in this paper). A Farey tree is a binary
tree labeled in terms of a Farey sequence and it is constructed as
follows: The left child of any number is its mediant with the
nearest smaller ancestor, and the right child is the mediant  with
its nearest larger ancestor. Using 2/3 as an example, its closest
smaller ancestor is 1/2, so the left child is 3/5, and its closest
larger ancestor is 1/1, and the right child is 3/4. The process can
continue indefinitely, see Fig.~\ref{fig:fareytree}. Note that on
each level the numbers appear always in order and that all the
rationals within the interval [0,1] are included in the infinite
Farey tree. Moreover, the Farey sequence of order $n$ may be found
by an inorder traversal of this tree, backtracking whenever a number
with denominator greater than $n$ is reached. The Farey tree is a
subtree of the Stern-Brocot tree which contains all positive
rationals, see for example~\cite{GrKnPa89,Au08}.

\begin{figure}
\begin{center}
\includegraphics[width=10cm]{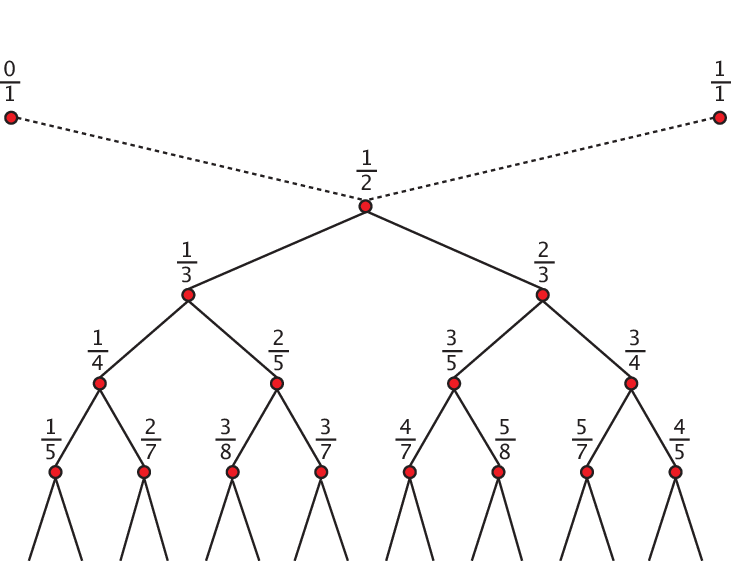}
\caption{Farey tree. }
\label{fig:fareytree}
\end{center}
\end{figure}

A Farey sequence can be related to a graph construction known as
Farey graph. A Farey graph ${\cal F}$ is a graph with vertex set on
irreducible rational numbers between $0$ and $1$, and two rational
numbers $p/q$  and $r/s$  are adjacent in ${\cal F}$  if and only if
$rq- ps = 1$ or $-1$, or equivalently if they are consecutive terms
in some Farey sequence $F_m$. Note that the graph can be obtained
from a subtree of the Farey tree by adding new edges, or
equivalently, a subtree of the Farey tree is a spanning tree of a
Farey graph. This graph was first introduced by Matula and Kornerup
in 1979 and further studied by Colbourn in 1982, and has many
interesting properties. For example,  they are  minimally
3-colorable, uniquely Hamiltonian, maximally outerplanar and
perfect, see \cite{MaKo79,Co82,Bi88}

\begin{figure}
\begin{center}
\includegraphics[width=10cm]{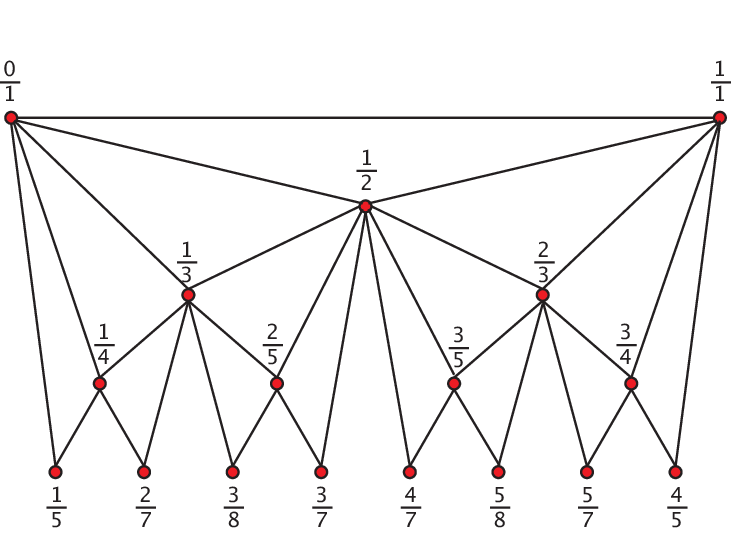}
\caption{A Farey graph with 17 vertices. }
\label{fig:fareygraph}
\end{center}
\end{figure}

In this paper we introduce a simple construction method for a family
of Farey graphs, inspired by the mediant calculation of new nodes in
the Farey tree. Other than the properties proved for general Farey
graphs in~\cite{MaKo79,Co82}, we determine analytically, for this
family of graphs, their order, size, degree distribution, degree correlations, clustering
and transitivity coefficients, diameter, and average distance. 
The graphs are of interest as models for complex systems~\cite{Ne03}, as
the parameters computed match those of their associated networks. 
They have small-world characteristics (a large clustering with  small average
distance) and they are minors of the pseudo-fractal
networks~\cite{DoGoMe02} and Apollonian graphs~\cite{AnHeAnSi05},
but in these cases the graphs are also scale-free (their degree
distribution follows a power-law), see~\cite{BaAl99}, while in our
case the degrees follow an exponential distribution. 
However, relevant networks, which describe technological and biological
systems, like  some electronic circuits and protein networks are
almost planar and have an exponential degree
distribution~\cite{BaWe00,FeJaSo01,Ne03}. 
This Farey graph family is also related to some random networks 
constructed following the
method known as geographical attachment~\cite{OzHuOt04,ZhZhWaSh07}.

\section{Definition, order and size of the Farey graphs  ${\cal F}(t)$}

In this section we give an iterative construction method for a
family of Farey graphs. 
When modeling real world networks with graphs, different methods have been considered:  
edge reconnection, duplication and addition of substructures, etc. 
Iterative methods that add new vertices at each step are useful as they can 
mimic  processes that drive the network evolution through time. 
For example, in social, collaborative 
and some technological and biological networks it is very likely
that a new node will join to nodes that are already
adjacent. This suggests the following graph construction:
\smallskip

\begin{definition}
\label{def:fareygraph} The graph ${\cal F}(t)=(V(t),E(t))$, $t\geq
0$, with vertex set $V(t)$ and edge set $E(t)$ is constructed as
follows:

For $t=0$, ${\cal F}(0)$ has two vertices and an edge joining them.

For $t\geq 1$, ${\cal F}(t)$ is obtained from ${\cal F}(t-1)$ by
adding  to every edge introduced at step $t-1$ a new vertex adjacent
to the endvertices of this edge.
\end{definition}.

Therefore, at $t=0$, ${\cal F}(0)$ is  $K_2$, at $t=1$ the graph is
$K_3$, at $t=2$ the graph has five vertices and seven edges, etc.
Notice that the graph ${\cal F}(t)$, $t>1$,  can also be constructed
recursively from two copies of  ${\cal F}(t-1)$, by identifying two
initial vertices -one from each copy of  ${\cal F}(t-1)$- and adding
a new edge between the other two initial vertices, see Fig.
\ref{fig:fareygraphs0to3}.

In what follows we will call a {\em generating edge} an edge that,
according to the definition~\ref{def:fareygraph}, is used to
introduce a new vertex in the next iteration step.

\begin{figure}
\begin{center}
\includegraphics[width=12cm]{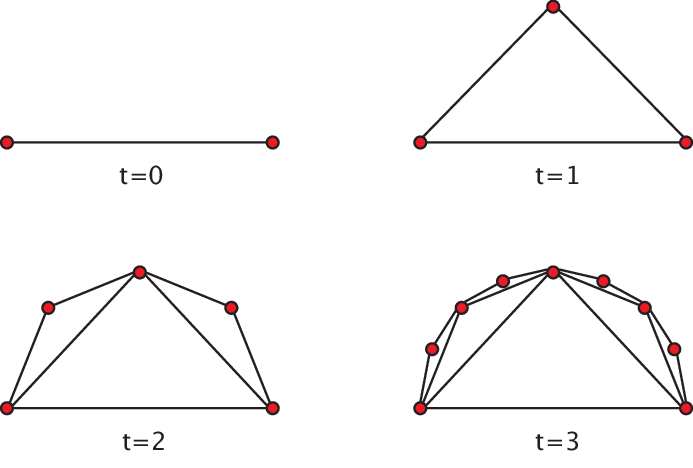}
\caption{Farey graphs ${\cal F}(0)$ to ${\cal F}(3)$. } \label{fig:fareygraphs0to3}
\end{center}
\end{figure}

This graph construction is deterministic, and uses an iteration
process similar to that of~\cite{DoGoMe02} where  Dorogovtsev
\emph{et al.} introduced a graph, which they called ``pseudofractal
scale-free web'' constructed as follows:  At each  step, for every
edge of the graph (not only those introduced at the last step as in
our graph construction), a new node is added, which is attached to
the endvertices of the edge. In their construction the starting
graph is $K_3$. This graph construction was generalized
in~\cite{CoFeRa04}. All these graphs have relevant distinct
properties with respect to the Farey graph family defined here.
Finally, our graphs constitute the extreme case $q=0$ of the random
construction in~\cite{ZhZhWaSh07}, where at each step an edge is
chosen with probability $q$, and after the insertion of the new
vertex and edges, the edge is removed.

We can see that our construction produces Farey graphs by labeling
the vertices: if the two initial vertices are labeled $0/1$ and
$1/1$, and each new added vertex is labeled with the mediant  of the
two vertices where it is joined, the vertices verify the definition
of Farey graphs as given, for example, by Colbourn in~\cite{Co82} or
Biggs in~\cite{Bi88}. Therefore, as has been proved there, the
graphs ${\cal F}(t)$ are minimally 3-colorable, uniquely
Hamiltonian, maximally outerplanar and perfect.


Thanks to the deterministic nature of the graphs ${\cal F}(t)$,
we can give exact values for  the relevant topological properties
of this graph family, namely, order, size,
degree distribution, clustering, transitivity, diameter and average distance.

To find the order and size of ${\cal F}(t)$,  we denote the number
of new vertices and edges added at step $t$ by $L_V(t)$ and
$L_E(t)$, respectively. These edges are generating edges.

Thus, initially ($t=0$), we have $L_V(0)=2$ vertices and $L_E(0)=1$
edges in ${\cal F}(0)$.

As each generating edge produces a new vertex and two generating
edges at the next iteration, we have that $L_V(t)=L_E(t-1)$ and
$L_V(t)=2\, L_E(t-1)$, which leads to $L_E(t)= 2^{t}$ and
$L_V(t)=2^{t-1}$

Therefore, the order
of the graph is $|V(t)|=\sum_{i=0}^{t}L_V(i)$ and
the total number of edges is $|E(t)|=\sum_{i=0}^{t}L_E(i)$ and we have:

\begin{proposition}
The order  and size  of the graph ${\cal F}(t)=(V(t),E(t))$ are,
respectively,
\begin{equation}\label{ordersize}
|V(t)|=2^{t}+1 \quad and \quad |E(t)|=2^{t+1}-1.
\end{equation}
\end{proposition}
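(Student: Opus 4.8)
The plan is to evaluate the two quantities $|V(t)| = \sum_{i=0}^{t} L_V(i)$ and $|E(t)| = \sum_{i=0}^{t} L_E(i)$ directly, using the closed forms for $L_V$ and $L_E$ derived immediately above the statement. First I would treat the size. Because each generating edge spawns exactly two generating edges at the next step, $L_E$ obeys $L_E(t) = 2\,L_E(t-1)$ with $L_E(0) = 1$, so $L_E(t) = 2^{t}$. The size is then a partial sum of a geometric series,
\[
|E(t)| = \sum_{i=0}^{t} 2^{i} = 2^{t+1} - 1,
\]
which is the second formula.

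For the order I would use $L_V(t) = L_E(t-1) = 2^{t-1}$ for $t \geq 1$ together with the initial value $L_V(0) = 2$. The single point that needs care is that the closed form $2^{t-1}$ does not extend to $t = 0$ (it would give $1/2$), so the $i = 0$ term must be split off rather than absorbed into the sum. With this split,
\[
|V(t)| = L_V(0) + \sum_{i=1}^{t} 2^{i-1} = 2 + (2^{t} - 1) = 2^{t} + 1,
\]
giving the first formula.

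A second, self-contained route is induction on $t$ using the recursive description in the text, where ${\cal F}(t)$ for $t \geq 2$ is built from two copies of ${\cal F}(t-1)$ by identifying one pair of initial vertices and adding one edge between the remaining two. This yields the recurrences $|V(t)| = 2\,|V(t-1)| - 1$ and $|E(t)| = 2\,|E(t-1)| + 1$; substituting the claimed formulas shows $2(2^{t-1}+1) - 1 = 2^{t}+1$ and $2(2^{t}-1) + 1 = 2^{t+1}-1$, closing the step, with the base cases ${\cal F}(0) = K_2$ and ${\cal F}(1) = K_3$ checked by hand. Either argument is routine; there is no real obstacle here, and the only genuine bookkeeping subtlety is the off-by-one caused by the vertex identification (respectively by the exceptional $i=0$ term). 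I would present the direct geometric-sum computation as the cleaner of the two.
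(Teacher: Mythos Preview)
Your proposal is correct and matches the paper's approach exactly: the paper derives $L_E(t)=2^t$ and $L_V(t)=2^{t-1}$ in the text immediately preceding the proposition and then simply sums these, leaving the geometric-series evaluation implicit. Your treatment is in fact more careful (you explicitly separate the exceptional $i=0$ vertex term and supply the alternative inductive argument), but the underlying idea is the same.
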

$\Box$

The average degree is $\bar{\delta(t)}=4-{3}/{(2^{t}+1)}$.
For large $t$, it is small and approximately equal to $4$.

Many real-life networks are sparse in the sense that the number
of links in the network is much less than $|V(t)|(|V(t)|-1)/2$, the
maximum number of links~\cite{Ne00,AlBa02,DoMe02,Ne03,BoLaMoChHw06}.


\section{Relevant characteristics of ${\cal F}(t)$}

In this section we find analytically the degree distribution, degree correlations,
clustering and transitivity coefficients, diameter and average
distance of the graphs ${\cal F}(t)$.


\subsection{\em Degree distribution}\label{degdist}

When studying networks associated with complex systems, the degree
distribution is an important characteristic related to their
topological, functional and  dynamical properties. Most real life
networks follow a power-law degree distribution and are called
scale-free networks. However, relevant networks, which describe
technological and biological systems, like  some electronic circuits
and protein networks have an exponential degree
distribution~\cite{BaWe00,FeJaSo01,Ne03}. The well known
Watts-Strogatz small world network model also follows an exponential
degree distribution~\cite{WaSt98} as it is the case of the Farey
graphs analyzed here.

The degree distribution of ${\cal F}(t)$ is deduced from the following
facts: Initially, at $t=0$, the graph has two vertices of degree
one. When a new vertex $v$ is added to the graph at step $t_{c,v}$,
this vertex has degree $2$ and it is connected to two generating
edges. From the construction process, all vertices of the graph,
except the initial two vertices, are always connected to two
generating edges and will increase their degrees by two units at the
next step.

\medskip
\begin{proposition} \label{deg-dist}
The cumulative degree distribution of the graph ${\cal F}(t)$ follows an
exponential  distribution  $\Pcum (\delta)\sim 2^{-\frac{\delta}{2}}$
\end{proposition}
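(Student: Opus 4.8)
The plan is to translate the degree of a vertex into the step at which it was created, count how many vertices are created at each step, and then sum. First I would make precise the degree-growth rule sketched just above the statement, by induction on $t$. The claim to establish is that every vertex other than the two initial ones is incident, at each step after its creation, to exactly two \emph{generating} edges: at the step $t_{c,v}$ where $v$ is born it is joined to the two endpoints of a single generating edge, and the two new edges incident to $v$ become the generating edges for the next step; at every later step each of the two generating edges incident to $v$ spawns one new neighbor, adding exactly $2$ to the degree of $v$ and producing two fresh generating edges incident to $v$. Consequently a vertex created at step $s\ge 1$ has degree exactly $2(t-s+1)$ in ${\cal F}(t)$. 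The two initial vertices are the only exceptions—each remains incident to a single generating edge throughout, so each has degree $t+1$—but since there are only two of them they will not affect the asymptotics.

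Next I would invert this relation. Setting $\delta=2(t-s+1)$ gives $s=t+1-\delta/2$, so for even $\delta$ the vertices of degree $\delta$ in ${\cal F}(t)$ are precisely those created at step $s=t+1-\delta/2$, of which there are $L_V(s)=2^{s-1}=2^{\,t-\delta/2}$ by the count already used to obtain the order and size. This gives the point degree distribution directly and shows that only even degrees occur (apart from the two exceptional vertices).

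Then I would pass to the cumulative distribution. The number of vertices of degree at least $\delta$ equals the number born at or before step $t+1-\delta/2$, namely
\[
\sum_{s=1}^{\,t+1-\delta/2}2^{s-1}=2^{\,t+1-\delta/2}-1,
\]
by the geometric series, plus the two initial vertices whenever $t+1\ge\delta$. Dividing by the order $|V(t)|=2^{t}+1$ and taking $t$ large yields
\[
\Pcum(\delta)\approx\frac{2^{\,t+1-\delta/2}}{2^{t}}=2^{\,1-\delta/2}\sim 2^{-\delta/2},
\]
which is the asserted exponential law (the constant factor $2$ being absorbed in $\sim$).

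The only real content lies in the first step: verifying the invariant that each non-initial vertex stays incident to exactly two generating edges, so that its degree increases by exactly two at every iteration. Once that invariant is secured, the identification of birth step with degree and the summation of the geometric series are routine, and the restriction to even $\delta$ together with the negligible contribution of the two degree-$(t+1)$ initial vertices is minor bookkeeping.
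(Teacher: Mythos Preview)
Your proposal is correct and follows essentially the same approach as the paper: translate degree into creation step via $\delta=2(t-s+1)$, count the vertices born at each step using $L_V(s)=2^{s-1}$, and sum a geometric series to obtain $\Pcum(\delta)\sim 2^{-\delta/2}$. Your treatment is in fact slightly cleaner---you count directly the vertices of degree at least $\delta$, whereas the paper reaches the same sum through the identification $\Pcum(\delta)\approx{\cal P}(t'\le t-(\delta-2)/2)$---but the substance is identical.
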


\begin{proof}
We denote the degree of vertex $v$  at step $t$ by $\delta_{v}(t)$.
By construction, we have
\begin{eqnarray}
\delta_{v}(t+1)& = &\delta_{v}(t)+2\quad  v\in V(t), v\neq {\small {0\over 1},{1\over 1}} \nonumber \\
\delta_{0\over 1}(t)&=&\delta_{1\over 1}(t)=t+1
\end{eqnarray}
if $t_{c,v}$ ($t_{c,v}>0$) is the step at which a vertex $v$  is added to the graph, then
$\delta_{v}(t_{c,v})=2$ and hence
\begin{equation}\label{deg-growing}
\delta_{v}(t)=2(t-t_{c,v}+1).
\end{equation}
Therefore, the degree distribution of the vertices of the graph
${\cal F}(t)$ is as follows:
 the number of vertices of degree
$2\cdot 1,2\cdot 2,2\cdot 3,\cdots,2\cdot t$, equals, respectively,
to $2^{t-1},  2^{t-2}, \ldots, 2, 1$ and the two initial vertices
have degree $t+1$.

The degree distribution ${\cal{P}}(\delta)$ for a network gives the
probability that a randomly selected vertex has exactly $\delta$
edges. In the analysis of the degree distribution of real life
networks,  see~\cite{JuKiKa02,Ne03}, it is usual to consider  their
cumulative degree distribution,
$$\Pcum(\delta)=\sum_{\delta'=\delta}^{\infty}{\cal{P}}(\delta'),$$
which is the probability that the degree of a vertex is greater than
or equal to $\delta$. 

Networks whose degree distributions are exponential:
${\cal{P}}(\delta) \sim e^{-\alpha\delta}$, have also an exponential
cumulative distribution with the same exponent:
\begin{equation} \label{cum-deg-dist}
\Pcum(\delta)=\sum_{\delta'=\delta}^{\infty}P(\delta')\approx
\sum_{\delta'=\delta}^{\infty}e^{-\alpha\delta'}=\left(\frac{e^{\alpha}}{e^{\alpha}-1}\right)
e^{-\alpha\delta}.
\end{equation}

Because the contribution to the degree distribution of the two
initial vertices of a Farey graph is small, we can use
Eq.~(\ref{deg-growing}),
and we have for ${\cal F}(t)$  that   
$\Pcum(\delta)\approx {\cal{P}}\left (t'\leq
\tau=t-\frac{\delta-2}{2}\right)$. Hence, for large $t$
\begin{eqnarray}\label{cumulative distribution}
\Pcum(\delta) \approx\sum_{t'=0}^{\tau}
\frac{L_V(t')}{|V(t)|}=\frac{2}{2^{t}+1}+
\sum_{t'=1}^{\tau}\frac{2^{t'-1}}{2^{t}+1} =\frac{1+2^\tau}{1+2^{t}} \sim
2^{-\frac{\delta}{2}}.
\end{eqnarray}
\end{proof}

\subsection{\em Degree correlations}

The study of degree correlations in a graph is a particularly 
interesting subject in the
context of network analysis, as they account
for some important network structure-related effects~\cite{MsSn02,Ne02,Ne03}. 
One first parameter  is the average degree of the
adjacent vertices for vertices with degree $\delta$ as  a function of this degree value~\cite{PaVaVe01}, which we denote as $k_{\rm nn}(\delta)$. 
When $k_{\rm nn}(\delta)$ increases with
$\delta$, it means that vertices have a tendency to connect to
vertices with a similar or larger degree. 
In this case the graph is called  assortative~\cite{Ne02}. 
In contrast, if $k_{\rm nn}(\delta)$ decreases with $\delta$, 
which implies that
vertices of large degree are likely to be adjacent to vertices with
small degree, then the graph is said to be disassortative. 
The graph is uncorrelated if $k_{\rm nn}(\delta)={\rm const}$.

We can obtain an exact analytical expression of $k_{\rm nn}(\delta)$ 
for the Farey graphs ${\cal F}(t)$.
Note that except for the initial two vertices of step $0$,
no vertices introduced at the same time step, which have the same
degree, will be adjacent to each other. 
All adjacencies to vertices with a larger degree are produced when the vertex is added to the graph,
and then the adjacencies to vertices with a smaller degree are made at each subsequent step. 
This results in the expression \ref{deg-growing}, i.e. $\delta=2(t-t_{c}+1)$, $t_{c}\geq1$, and we can write:

\begin{equation}\label{knn01}
k_{nn} (\delta) =\small
\frac{1}{{L_V(t_{c})\delta(t_{c},t)}}\left[{2\!\!\sum\limits_{t'_{c}=1}^{t_{c}-
1}\! {L_V (t'_{c}, )\delta(t'_{c},t)}+ 2\!\!\!\!\!\sum\limits_{t'_{c}=
t_{c} + 1}^{t}\!\!\!\!\! {L_V (t_{c})\delta(t'_{c}, t)}+ L_V
(0)\delta(0,t)} \right]
\end{equation}
where $\delta(t_{c},t)$ represents the degree of a vertex at step
$t$, which was generated at step $t_{c}$. 
Here the first sum on
the left-hand side accounts for the adjacencies made to vertices with
larger degree (i.e. $0 <t'_{c} <t_{c}$) when the vertex was
introduced at step $t_{c}$. The second sum represents the edges introduced to
vertices with a smaller degree  at each step $t'_{c}>t_{c}$. 
And the last term explains the adjacencies made to the initial vertices 
of step $0$.  Eq.~(\ref{knn01}) leads to
\begin{equation}\label{knn02}
k_{nn} (\delta) = t - t_{c}  + 2 + \frac{4}{{t - t_{c}  + 1}} -
\frac{{(t + 3)\cdot2^{1 - t_{c} } }}{{t - t_{c}  + 1}}.
\end{equation}
Writing Eq.~(\ref{knn02}) in terms of $\delta$, we have
\begin{equation}\label{knn03}
k_{nn} (\delta) = \frac{\delta}{2} + \frac{8}{\delta} + 1 -
 \frac{(t+3)\cdot2^{1+\delta/2}}{\delta\cdot 2^t}.
\end{equation}
Thus we have obtained the degree correlations for those vertices
generated at step $t_{c} \geq1$. For the initial two vertices, as each
has   degree  $ \delta_0=t+1$, we  obtain
\begin{equation}\label{knn04}
k_{nn} (\delta_0) = \frac{1}{\delta_0}\sum\limits_{t'_{c} =
0}^{t} {\delta(t'_{c}, t) = } \delta_0.
\end{equation}
From Eqs.~(\ref{knn03}) and~(\ref{knn04}), it is obvious that for
large graphs (i.e. $t\rightarrow\infty$), $k_{nn}(\delta)$ is
approximately a linear function of $\delta$, which suggests that 
Farey graphs are assortative.

Degree correlations can also be  described by the  Pearson correlation
coefficient $r$ of the degrees of the endvertices of the edges.
For a general
graph $G(V,E)$, this coefficient is defined as~\cite{Ne02}
\begin{equation}
r=\frac{|E(t)|\sum_i j_i k_i -[\sum_i \frac{1}{2} (j_i+ k_i)]^2 }{|E(t)|\sum_i \frac{1}{2} (j_i^2+ k_i^2) -[\sum_i \frac{1}{2}(j_i+ k_i)]^2 }
\end{equation}\label{rNe}
where $j_i$, $k_i$ are the degrees of the endvertices of the $i$th edge, with $i=1,\cdots ,|E(t)|$.
This coefficient is in the range
$-1\leq r \leq 1$. If the graph is uncorrelated, the correlation
coefficient equals zero. Disassortative graphs have $r<0$, while
assortative graphs have a value of $r>0$.

If $r(t)$  denotes the Pearson degree correlation of ${\cal F}(t)$, we have
%
%
\begin{proposition}
The  Pearson degree correlation coefficient of the graph ${\cal F}(t)$ is

$$
r(t )= \frac{16\cdot 2^{2t} -  2^t(4 t^3 + 6 t^2 - 40 t - 26) - (t^4 + 10 t^3 + 43 t^2 + 80 t + 2)}
{64\cdot 2^{2 t} -  2^t (6 t^3 + 18 t^2 - 6 t + 46) - (t^4 + 9 t^3 + 37 t^2 + 63 t + 18)}
$$
\label{corrFt}
\end{proposition}
\begin{proof}
We find the degrees of the endvertices for every edge of 
the Farey graph. 
Let $\langle j_i, k_i\rangle$ stand for the $i$th
edge in ${\cal F}(t)$ connecting two vertices with degrees $j_i$ and
$k_i$. Then the initial edge created at step $0$ can be expressed as
$\langle t+1, t+1 \rangle$. By construction, at step $t_{c}$
($t_{c}\geq 1$), $2^{t_{c}}$ new edges are added to the graphs.
Each new iteration will increase the degree of these vertices. These
edges connect new vertices, which have degree 2, to every vertex in
${\cal F}(t_{c}-1)$ which have the following degree distribution
at step $t_{c}-1$: $\delta(l,t_{c}-1)=2(t_{c}-l)$  ($1 \leq
l\leq t_{c}-1$) and $\delta(0,t_{c}-1)=t_{c}$. At each of the
subsequent steps of $t_{c}-1$, all these vertices will increase
their degrees by two, except the two initial vertices, the degree
of which grow by one unit.
Thus, at step $t$, the number of
edges $\langle 2(t- t_{c} + 1), 2(t- l + 1)\rangle$ ($1 \leq l\leq
t_{c}-1$) is $2^l$ and the number of edges $\langle 2(t- t_{c} +
1), t+1 \rangle$ is 2.

Now we can explicitly find the sums in Eq.~\ref{rNe}:
\begin{eqnarray*}
\sum \limits_{m=1}^{|E(t)|} j_m k_m  &= & - 79 + 5 \cdot 2^{4 + t} 
- 52 t - 15 t^2  - 2 t^3,\\
\sum \limits_{m=1}^{|E(t)|} (j_m  + k_m)  &=&  - 22 + 3 \cdot 2^{3 +
t}  - 12 t - 2 t^2,\\
\sum \limits_{m=1}^{|E(t)|}
(j_m^2 + k_m^2 ) &=&  - 206 + 13 \cdot 2^{4 + t}  - 138 t - 42 t^2  -
6t^3,
\end{eqnarray*}
which lead to the stated result.
\end{proof}

We can easily see that for $t$ large, $r(t)$ tends
to the value $\frac{1}{4}$, which again indicates that the 
Farey graphs are assortative.

\subsection{ \em Clustering coefficient}
The clustering coefficient of a graph is another parameter used to
characterize small-world networks. The clustering coefficient of a
vertex was introduced in~\cite{WaSt98} to quantify this concept:
Given a graph $G=(V,E)$, for each vertex $v\in V(G)$ with degree
$\delta_v$, its {\it clustering coefficient\/} $c(v)$ is defined as
the fraction of the ${\delta_v\choose 2}$ possible edges among the
neighbors of $v$ that are present in $G$. More precisely, if
$\epsilon_v$ is the number of edges between the $\delta_v$ vertices
adjacent to vertex $v$, its clustering coefficient is
\begin{equation}
\label{c(v)}
c(v)=\frac{2\epsilon_v}{\delta_v(\delta_v-1)},
\end{equation}
whereas the {\em clustering coefficient} of $G$, denoted by $c(G)$,
is the average of $c(v)$ over all  vertices $v$ of $G$:
\begin{equation}
\label{c(G)} c(G)=\frac{1}{|V(G)|}\sum_{v\in V(G)}c(v).
\end{equation}

Some authors, see for example \cite{NeWaSt02}, use another
definition of {\em clustering coefficient} of $G$:
\begin{equation}
\label{c'(G)} c'(G)=\frac{3\,T(G)}{\tau(G)}
\end{equation}
where $T(G)$ and $\tau (G)$ are, respectively, the number of {\it
triangles} (subgraphs isomorphic to $K_3$) and the number of {\it
triples} (subgraphs isomorphic to a path on $3$ vertices) of $G$.
A triple at a vertex $v$ is a $3$-path with central vertex $v$. Thus
the number of triples at $v$ is
\begin{equation}
\label{tau(v)}
 \tau(v)={\delta_v \choose 2}=\frac{\delta_v(\delta_v-1)}{2}.
\end{equation}
The total number of triples of $G$ is denoted by $\tau(G)=\sum_{v\in
V(G)}\tau(v)$.

Using these parameters, note that the clustering coefficient of a
vertex $v$ can also be written as $c(v)=\frac{T(v)}{\tau(v)}$, where
$T(v)={\delta_v \choose 2}$ is the number of triangles of $G$ which
contain vertex $v$. From this result, we see that $c(G)=c'(G)$ if,
and only if,
$$
|V(G)|=\frac{\sum_{v\in V(G)}\tau(v)}{\sum_{v\in
V(G)}T(v)}\sum_{v\in V(G)} \frac{T(v)}{\tau(v)}.
$$
And this is true for regular graphs or for graphs such that all vertices
have the same clustering coefficient.
$c'(G)$ is known in the context of social networks as
{\em transitivity coefficient}.

We compute here both the clustering coefficient and  the
transitivity coefficient.

\medskip
\begin{proposition}
The clustering coefficient $c({\cal F}(t))$ of the graph ${\cal F}(t)$ is
\begin{equation}
c({\cal F}(t))= \frac{1}{2^t+1}\left[  2^{t}\ln 2 -\frac{1}{2} \Phi
\left(\frac{1}{2},1,1+t\right)+\frac{4}{t+1}\right]
\end{equation}
where $\Phi$ denotes the Lerch's transcendent function.
\end{proposition}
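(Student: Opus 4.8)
The plan is to compute the clustering coefficient by exploiting the degree structure already established in Proposition~\ref{deg-dist}, together with the fact that in a maximally outerplanar graph the number of triangles through each vertex is rigidly determined by its degree. First I would establish, for each vertex $v$, the exact value of $\epsilon_v$, the number of edges among the neighbors of $v$. The key structural observation is that every vertex of ${\cal F}(t)$ other than the two initial vertices acquires its neighbors in a highly controlled way: when $v$ is born at step $t_{c,v}$ it is adjacent to exactly $2$ vertices (the endpoints of its generating edge, which are themselves adjacent), and at each subsequent step it gains $2$ new neighbors, each of which forms a triangle with $v$ and one previously-present neighbor. Because the graph is maximally outerplanar and uniquely Hamiltonian, the neighborhood of $v$ induces exactly $\delta_v - 1$ edges, so $\epsilon_v = \delta_v - 1$. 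Substituting into Eq.~(\ref{c(v)}) gives the clean per-vertex formula
\begin{equation}
c(v) = \frac{2(\delta_v - 1)}{\delta_v(\delta_v - 1)} = \frac{2}{\delta_v}. \nonumber
\end{equation}

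Next I would assemble $c({\cal F}(t))$ by averaging $c(v) = 2/\delta_v$ over all vertices, using the exact degree census from the proof of Proposition~\ref{deg-dist}: there are $2^{t-k}$ vertices of degree $2k$ for $k = 1, \ldots, t$, plus the two initial vertices of degree $t+1$. Thus
\begin{equation}
c({\cal F}(t)) = \frac{1}{2^t + 1}\left[ \sum_{k=1}^{t} 2^{t-k}\cdot\frac{2}{2k} + 2\cdot\frac{2}{t+1} \right] = \frac{1}{2^t+1}\left[ \sum_{k=1}^{t} \frac{2^{t-k}}{k} + \frac{4}{t+1} \right]. \nonumber
\end{equation}
The whole problem then reduces to evaluating the finite sum $S = \sum_{k=1}^{t} 2^{t-k}/k$ in closed form. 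Pulling out $2^t$, this is $2^t \sum_{k=1}^{t} (1/2)^k / k$, a truncated series for $-\ln(1 - x)$ at $x = 1/2$, whose value is $\ln 2$. Writing $\sum_{k=1}^{t} (1/2)^k/k = \ln 2 - \sum_{k=t+1}^{\infty}(1/2)^k/k$ and reindexing the tail by $j = k - t$ converts the remainder into a Lerch transcendent: $\sum_{j=1}^\infty (1/2)^{t+j}/(t+j) = 2^{-t}\sum_{j=1}^\infty (1/2)^{j}/(t+j)$, which is $2^{-t}\cdot\frac{1}{2}\,\Phi\!\left(\tfrac12,1,1+t\right)$ after matching the standard definition $\Phi(z,s,a) = \sum_{n=0}^\infty z^n/(n+a)^s$.

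Putting these together yields $S = 2^t \ln 2 - \tfrac12\,\Phi\!\left(\tfrac12,1,1+t\right)$, and substituting back produces exactly the stated expression. \emph{The main obstacle} I anticipate is not the summation but the structural lemma $\epsilon_v = \delta_v - 1$: I must argue carefully that no two neighbors acquired at different steps are spuriously adjacent and that the generating-edge mechanism never creates a neighbor triangle beyond the $\delta_v - 1$ guaranteed by outerplanarity. The cleanest justification is to invoke maximal outerplanarity directly, since a maximal outerplanar graph on $n \geq 3$ vertices is a triangulation of a polygon in which every interior vertex of degree $\delta$ sits in exactly $\delta - 1$ triangles; the boundary (Hamiltonian) cycle supplies two of the $\delta_v$ incident edges and the triangulating chords account for the rest. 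Verifying this count rigorously for the two exceptional initial vertices — and confirming they too satisfy $c(v) = 2/\delta_v$ rather than some boundary-corrected value — is the one place where I would slow down and check the small cases $t = 1, 2, 3$ against Fig.~\ref{fig:fareygraphs0to3} before trusting the general formula.
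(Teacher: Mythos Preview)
Your proposal is correct and follows essentially the same route as the paper: establish $\epsilon_v=\delta_v-1$ from the generating-edge mechanism so that $c(v)=2/\delta_v$, then average over the degree census $2^{t-k}$ vertices of degree $2k$ plus the two initial vertices of degree $t+1$. You in fact supply more detail than the paper on how the finite sum $\sum_{k=1}^{t}2^{t-k}/k$ is rewritten as $2^t\ln 2-\tfrac12\Phi(\tfrac12,1,1+t)$ via the Taylor tail of $-\ln(1-x)$; the paper simply asserts that closed form, and it justifies $\epsilon_v=\delta_v-1$ purely through the step-by-step growth argument rather than by invoking maximal outerplanarity.
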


\begin{proof}
When a new vertex $v$ joins the graph, its degree is $\delta_v=2$
and $\epsilon_v$ equals 1.
Each subsequent addition of an edge to this vertex increases both parameters by one.
Thus, $\epsilon_v$ equals  to $\delta_v-1$ for all vertices at all steps.
Therefore there is a one-to-one correspondence between the degree of a vertex and its clustering.
For a vertex $v$ of degree $\delta_v$, the  expression for
its clustering coefficient is $c(v)=2/\delta_v$.
It is interesting to note that this scaling of the clustering coefficient with the
degree has been observed empirically in several real-life networks~\cite{RaBa03}.
The same value has also been obtained in other models~\cite{OzHuOt04,ZhRoGo06,ZhRoCo06,DoGoMe02}.

We use the degree distribution found above to calculate the the clustering
coefficient of the graph ${\cal F}(t)$.
Clearly, the number of  vertices with clustering coefficient
$1$, $\frac{1}{2}$, $\frac{1}{3}$, $\cdots$, $\frac{1}{t-1}$,
$\frac{1}{t}$, $\frac{2}{t+1}$ , is equal to  $ 2^{t-1},  2^{t-2}, \ldots, 2, 1, 2$, respectively.

Thus, the clustering coefficient of the graph $c({\cal F}(t))$ is easily
obtained for any arbitrary step $t$:
\begin{eqnarray}
c({\cal F}(t))
&=&\frac{1}{ |V(t)|}\left[\sum_{i=1}^t\frac{1}{i}\cdot2^{t-i}+\frac{2}{t+1}\cdot 2\right] \nonumber \\
&=& \frac{1}{2^t+1}\left[  2^{t}\ln 2 -\frac{1}{2}
\Phi\left(\frac{1}{2},1,1+t\right)+\frac{4}{t+1}\right] .
\end{eqnarray}
\end{proof}

The clustering coefficient $c({\cal F}(t))$ tends to $ \ln2$ for large $t$.
Thus the clustering coefficient of ${\cal F}(t)$ is high.

To find the transitivity coefficient we need to calculate the number
of triangles and the number of triples of the graph.
\medskip
\begin{lemma}
The number of triangles of ${\cal F}(t)$ is $$T({\cal F}(t))=2^{t}-1, t\geq 1$$
\end{lemma}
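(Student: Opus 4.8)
The plan is to count triangles by tracking how many new triangles are created at each iteration step. I would first observe that the construction in Definition~\ref{def:fareygraph} adds exactly one new vertex per generating edge, and each such vertex is joined to the two endpoints of that generating edge, thereby closing off precisely one new triangle (the new vertex together with the two endpoints). Crucially, since the graph is maximally outerplanar, it contains no $K_4$, so no new vertex can create more than one triangle, and no triangle among older vertices is destroyed. This means the number of triangles added at step $t$ equals the number of generating edges processed, namely $L_V(t)=2^{t-1}$ for $t\geq 1$.

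Next I would set up the recurrence. Let $T({\cal F}(t))$ denote the number of triangles. At $t=1$ the graph is $K_3$, so $T({\cal F}(1))=1$. For $t\geq 2$, each of the $L_V(t)=2^{t-1}$ new vertices contributes exactly one new triangle, giving the recurrence $T({\cal F}(t))=T({\cal F}(t-1))+2^{t-1}$. Unrolling this telescoping sum yields
\begin{equation*}
T({\cal F}(t))=1+\sum_{i=2}^{t}2^{i-1}=\sum_{i=1}^{t}2^{i-1}=2^{t}-1,
\end{equation*}
which is the claimed formula. This matches the base case $T({\cal F}(1))=2^1-1=1$.

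The main subtlety — and the step I would be most careful about — is justifying that each newly added vertex closes \emph{exactly} one triangle and creates no additional triangles among previously existing vertices. The recursive two-copy description of ${\cal F}(t)$ mentioned after Definition~\ref{def:fareygraph} is useful here: identifying two initial vertices and adding a single connecting edge could in principle create triangles spanning the two copies, so I would verify that the added connecting edge joins the two \emph{other} initial vertices (those of degree $t$ in each copy), which are not adjacent to any common neighbor across the copies, hence no spanning triangle appears. Equivalently, I would lean on the maximally-outerplanar, $K_4$-free structure already established via the Farey-graph labeling: in a maximal outerplanar graph every internal face is a triangle and adding a vertex inside a face subdivides it into exactly one new triangular region per generating edge. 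Once this no-extra-triangle property is pinned down, the counting recurrence is immediate and the rest is the elementary geometric-series computation above.
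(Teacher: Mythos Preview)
Your proof is correct and follows essentially the same route as the paper: set up the recurrence $T({\cal F}(t))=T({\cal F}(t-1))+2^{t-1}$ from the observation that each generating edge yields exactly one new triangle, use the base case $T({\cal F}(1))=1$, and telescope. The paper's own argument is only two lines and omits the justification you supply; note, though, that your appeal to $K_4$-freeness and the two-copy description is unnecessary, since a newly added vertex has degree~$2$ and therefore can lie in at most one triangle regardless, while the only new edges are the two incident to it, so no triangle among old vertices can be created.
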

\begin{proof}
At a given step $t$  the number of new triangles introduced to the
graph is the number of generating edges, which are the edges
introduced in the former step $L_E(t) = 2^{t-1}$. Therefore $T({\cal
F}(t))= T({\cal F}(t-1))+2^{t-1}$, and as $T({\cal F}(1))=1$, we
have the result.
\end{proof}

Moreover, again from the results found above giving the number of
vertices of each degree, we obtain straightforwardly the following
result for the number of triples:

\medskip
\begin{lemma}
The number of triples of ${\cal F}(t)$ is $$\tau({\cal F}(t))=5\cdot 2^{t+1}- t^2 -6\cdot t-10$$
\end{lemma}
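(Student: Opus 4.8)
The plan is to evaluate $\tau({\cal F}(t))=\sum_{v\in V(t)}\tau(v)$ directly, using $\tau(v)=\binom{\delta_v}{2}$ from Eq.~(\ref{tau(v)}) together with the degree distribution established in the proof of Proposition~\ref{deg-dist}. Recall from there that, for $i=1,\dots,t$, the graph contains exactly $2^{t-i}$ vertices of degree $2i$, while the two initial vertices each have degree $t+1$. Summing $\binom{\delta_v}{2}$ over these two classes gives
\begin{equation*}
\tau({\cal F}(t))=\sum_{i=1}^{t}2^{t-i}\binom{2i}{2}+2\binom{t+1}{2}
=\sum_{i=1}^{t}2^{t-i}(2i^2-i)+t(t+1),
\end{equation*}
so the whole problem reduces to evaluating the two weighted sums $\sum_{i=1}^{t} i\,2^{t-i}$ and $\sum_{i=1}^{t} i^2\,2^{t-i}$ in closed form.

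First I would dispatch these two finite sums. Factoring out $2^t$, each reduces to a truncated version of a standard series, $\sum_{i=1}^{t} i\,2^{-i}$ and $\sum_{i=1}^{t} i^2\,2^{-i}$, which I would obtain by differentiating the geometric sum $\sum_{i=0}^{t} x^i$ once and twice and evaluating at $x=\tfrac12$ (or by quoting the known closed forms). This should yield
\begin{equation*}
\sum_{i=1}^{t} i\,2^{t-i}=2^{t+1}-(t+2),\qquad
\sum_{i=1}^{t} i^2\,2^{t-i}=3\cdot 2^{t+1}-(t^2+4t+6).
\end{equation*}
The only genuine computation is the second identity; the first is routine, and the $i^2$ sum is the single step where care is needed with the boundary terms.

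With these in hand the remainder is purely algebraic: substituting into the displayed expression for $\tau({\cal F}(t))$, the exponential terms combine to $5\cdot 2^{t+1}$ while the polynomial terms collapse to $-t^2-6t-10$, giving the claimed formula. I expect no conceptual obstacle here — the main and essentially only delicate point is getting the closed form of $\sum_{i=1}^{t} i^2\,2^{-i}$ correct, since an error in its boundary term would propagate undetected into the final simplification. As a safeguard I would verify the result at $t=1$ (where ${\cal F}(1)=K_3$ has three degree-$2$ vertices, so $\tau=3$) and at $t=2$, both of which the closed form must reproduce.
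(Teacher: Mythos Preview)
Your proposal is correct and follows exactly the route the paper indicates: the paper simply remarks that the result follows ``straightforwardly'' from the degree distribution established in Proposition~\ref{deg-dist} and gives no further details, while you carry out that very computation in full. There is nothing to add; your closed forms for $\sum_{i=1}^{t} i\,2^{t-i}$ and $\sum_{i=1}^{t} i^2\,2^{t-i}$ are correct, and the final algebra yields the stated formula (and passes the $t=1,2$ sanity checks).
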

$\Box$

Now the transitivity coefficient follows from the former two lemmas.
\medskip
\begin{proposition}
The transitivity coefficient of ${\cal F}(t)$ is:
$$
 c'({\cal F}(t))=\frac{3\cdot 2^t-3}{10\cdot 2^{t}-t^2-6 t-10}
$$
\end{proposition}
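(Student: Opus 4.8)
The plan is to apply the definition of the transitivity coefficient, $c'(G)=3\,T(G)/\tau(G)$, directly. This is now essentially a bookkeeping step: the two preceding lemmas have already computed both ingredients, so the proof reduces to substituting them into the defining formula.

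First I would recall the two lemmas just established: the number of triangles is $T({\cal F}(t))=2^{t}-1$ for $t\geq 1$, and the number of triples is $\tau({\cal F}(t))=5\cdot 2^{t+1}-t^2-6t-10$. Then I would invoke Equation~(\ref{c'(G)}), namely $c'(G)=3\,T(G)/\tau(G)$, and substitute:
\begin{equation*}
c'({\cal F}(t))=\frac{3\,T({\cal F}(t))}{\tau({\cal F}(t))}
=\frac{3\,(2^{t}-1)}{5\cdot 2^{t+1}-t^2-6t-10}
=\frac{3\cdot 2^{t}-3}{5\cdot 2^{t+1}-t^2-6t-10},
\end{equation*}
which is exactly the claimed expression.

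There is genuinely no analytic obstacle here, so the only thing worth verifying is self-consistency of the input data. I would sanity-check the denominator lemma, since it is stated without a displayed proof in the excerpt (only the degree-count remark and a $\Box$). Using $\tau(v)=\binom{\delta_v}{2}$ from Equation~(\ref{tau(v)}) together with the degree census established in Proposition~\ref{deg-dist} --- namely $2^{t-i}$ vertices of degree $2i$ for $1\le i\le t$, plus the two initial vertices of degree $t+1$ --- one computes $\tau({\cal F}(t))=\sum_{i=1}^{t}2^{t-i}\binom{2i}{2}+2\binom{t+1}{2}$. Evaluating $\sum_{i=1}^{t}2^{t-i}(2i^2-i)$ via the standard closed forms for $\sum i\,2^{-i}$ and $\sum i^2\,2^{-i}$ confirms the stated value, so the substitution above is legitimate. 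A small caveat is that both lemmas are valid for $t\geq 1$, so I would note that the stated identity holds in that range; the degenerate cases $t=0,1$ can be checked by hand against ${\cal F}(0)=K_2$ (no triangles) and ${\cal F}(1)=K_3$ (where $T=1$, $\tau=3$, giving $c'=1$, matching the formula at $t=1$).
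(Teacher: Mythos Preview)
Your proposal is correct and mirrors the paper exactly: the paper simply remarks that the transitivity coefficient follows from the two preceding lemmas by substitution into $c'(G)=3\,T(G)/\tau(G)$, without writing out any further argument. Your added verification of the triple-count lemma and the small-$t$ sanity checks are extra care beyond what the paper provides, but entirely appropriate.
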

We see that while the clustering coefficient increases with $t$ and
tends to $ \ln2$, the transitivity coefficient tends to $0.3$.

The small-world concept describes the fact that, in many real-life
networks,  there is a relatively short distance between any pair of
nodes. In this case we will expect an average distance, and in some
cases a diameter, which scales logarithmically with the graph order.
Next we verify this two relations for the Farey graphs ${\cal
F}(t)$.

\subsection{\em Diameter}

Computing the exact diameter of ${\cal F}(t)$ can be done
analytically, and gives the result shown below.
\medskip
\begin{proposition}\label{prop:diam}
 The diameter of the graph ${\cal F}(t)$ is $diam({\cal F}(t))= t$, $t\geq 1$.
\end{proposition}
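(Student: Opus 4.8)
The plan is to establish $\mathrm{diam}({\cal F}(t))=t$ by induction on $t$, exploiting the recursive ``two copies'' description of the graph given just after Definition~\ref{def:fareygraph}: the graph ${\cal F}(t)$, $t>1$, is formed from two copies of ${\cal F}(t-1)$ by identifying one initial vertex from each copy and joining the other two initial vertices by a new edge. The base cases $t=1$ (namely $K_3$, with diameter $1$) and $t=2$ (the five-vertex graph, whose diameter is easily checked to be $2$) are verified directly. For the inductive step I would assume $\mathrm{diam}({\cal F}(t-1))=t-1$ and argue that attaching a second copy in this way increases the diameter by exactly one.

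\medskip

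\noindent\textbf{Upper bound.} First I would show $\mathrm{diam}({\cal F}(t))\le t$, i.e. any two vertices are at distance at most $t$. The key structural observation is that every vertex added at step $s$ is adjacent to the two endvertices of the generating edge that created it, and those two endvertices belong to the earlier graph ${\cal F}(s-1)$. Consequently each vertex is within distance one of a vertex of strictly smaller ``birth step,'' so by following this chain of parent edges any vertex reaches one of the two global initial vertices $0/1$, $1/1$ (which are adjacent to each other) in a bounded number of hops. Using the self-similar decomposition, a vertex lying inside one copy of ${\cal F}(t-1)$ can reach an identified (shared) initial vertex of that copy in at most $\lceil (t-1)/\text{something}\rceil$ steps; I would make this precise by tracking, for a vertex born at step $s$, the distance to the nearest initial vertex and showing it is at most $\lfloor s/1\rfloor$-type bound, then combining the two copies through the single shared vertex or the new bridging edge. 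The cleanest route is to prove by induction the stronger statement that from any vertex of ${\cal F}(t)$ there is a path of length at most $t$ to \emph{each} of the two initial vertices, and that the two initial vertices are adjacent; then any pair of vertices is joined by routing both to a common initial vertex, but that naively gives $2t$, so instead I route one vertex to an initial vertex of its copy (cost $\le t-1$ by induction) and cross the shared vertex or bridge (cost $\le 1$) into the other copy, landing on the target's side where it is already close.

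\medskip

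\noindent\textbf{Lower bound.} To see $\mathrm{diam}({\cal F}(t))\ge t$, I would exhibit two vertices at distance exactly $t$. The natural candidates are the two vertices created \emph{last}, at step $t$: the unique vertex inserted on each of the two ``extreme'' generating edges, i.e. the vertices labeled with fractions like $1/t$ and $(t-1)/t$ sitting near the ends of the Farey ordering. Because ${\cal F}(t)$ is maximally outerplanar and uniquely Hamiltonian (as noted in the excerpt, following~\cite{Co82,Bi88}), these newest vertices have degree $2$ and are ``deep'' in the recursive construction; I would argue that any path between them must repeatedly climb to lower birth-steps and descend again, forcing length at least $t$.

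\medskip

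\noindent The main obstacle is the lower bound: unlike the upper bound, which follows smoothly from the parent-edge chaining and the $2$-copy recursion, proving that no shortcut exists requires a genuine structural argument. I expect to need the outerplanarity together with a potential/level function assigning to each vertex its birth step $t_{c,v}$, and to show that along any edge the birth step changes in a controlled way, so that connecting two step-$t$ vertices on opposite sides of the construction cannot be done in fewer than $t$ edges. This monotonicity-of-levels argument, rather than any computation, is where the real work lies.
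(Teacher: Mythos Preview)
Your upper-bound argument has a genuine gap. Routing each of $u,v$ to an initial vertex of its copy costs up to $t-1$ by induction, crossing costs $1$, and then you are at an initial vertex of the other copy, still possibly at distance $t-1$ from the target; this yields only $d(u,v)\le 2t-1$, not $t$. You notice the naive $2t$ problem yourself, but the fix you sketch (``landing on the target's side where it is already close'') is not justified and is in fact false without a much stronger inductive hypothesis (something like: every vertex is within $\lceil t/2\rceil$ of \emph{some} initial vertex). The paper avoids this by a different, sharper observation: a vertex born at step $s\ge 2$ is attached to a generating edge created at step $s-1$, and such an edge joins the new step-$(s-1)$ vertex to an \emph{older} vertex, one born at step $\le s-2$. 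Hence from any step-$t$ vertex one can descend two birth-levels per hop, reaching ${\cal F}(0)$ (or ${\cal F}(1)$) in about $t/2$ hops; combining the two halves gives the bound $t$. This ``skip two levels'' fact is the missing ingredient in your plan.

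Your proposed lower-bound mechanism is also shakier than you suggest. The birth-step function is \emph{not} controlled along edges in the way you need: the two initial vertices (birth step $0$) are adjacent to vertices of every birth step $1,\dots,t$, so a single edge can drop the level from $t$ to $0$. A pure ``level changes slowly along edges'' potential argument therefore cannot work. The paper instead pins down specific diametral vertices: those step-$t$ vertices whose two parents were born at steps $t-1$ and $t-2$ (so neither parent gives a big shortcut), and argues directly that these realise distance $t$. If you want an inductive route, the cleaner statement to carry is the eccentricity of the initial vertices together with the cut structure of the two-copy decomposition (every path between the copies passes through the identified vertex or the new bridge edge), rather than a global level function.
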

\medskip
\begin{proof}
Clearly, at steps $t = 0$ and $t = 1$, the diameter is 1.

At each step $t\geq 2$,  by the construction process, the longest
path between a pair of vertices is for some vertices added at this
step. Vertices added at a given step $t\geq 1$ are not adjacent
among them and are always connected to two vertices that were
introduced at  different former steps. Now consider two vertices
introduced at  step $t\geq 2$, say $v_{t}$ and $w_{t}$. $v_{t}$ is
adjacent to two vertices, and one of them must have been added to
the graph  at step $t-2$ or a previous one. We consider two cases:
(a) For $t = 2m$ even and from $v_{t}$ we  reach in $m$ ``jumps'' a
vertex of  the  graph ${\cal F}(0)$, which we can also reach from
$w_{t}$ in a similar way. Thus $diam({\cal F}(2m))\leq 2m$. (b) $t =
2m+1$ is odd. In this case we can stop after $m$ jumps at ${\cal
F}(1)$, for which we know that the diameter is 1, and make $m$ jumps
in a similar way to reach $w_{t}$. Thus $diam({\cal F}(2m+1))\leq
2m+1$. This bound is reached by pairs of vertices created at step
$t$. More precisely, by those two vertices $v_{t}$ and $w_{t}$ which
have the property of being connected to two vertices introduced at
steps $t-1$, $t-2$. Hence,  $diam({\cal F}(t)) = t$ for any $t\geq
1$.
\end{proof}
\smallskip

As $\log |V({\cal F}(t))| = t\cdot\ln 2$, for large $t$ we have
$diam({\cal F}(t))\sim \ln|V({\cal F}(t))|$.
\medskip

Because the graph ${\cal F}(t)$ is sparse, has a high clustering and
a small, logarithmic, diameter (and also a logarithmic average
distance, as we will prove next) our model shows small-world
characteristics~\cite{WaSt98}.

Next we find the exact analytical expression for the average
distance of the graphs ${\cal F}(t)$.

\subsection{\em Average distance}

Given a graph $G=(V,E)$ its {\em average distance} or {\em mean
distance} is defined as: $\mu (G) = \frac{1}{|V(G)|(|V(G)|-1)}
\sum_{u,v \in V(G)} d(u,v)$ where $d(u,v)$ is the distance between
vertices $u$ and $v$ of $V(G)$. 

The recursive construction of ${\cal F}(t)$ allows to obtain
the exact value of $\mu({\cal F}(t))$:
%
%
\begin{proposition}\label{theo-avg-dst}
The average distance of ${\cal F}(t)$ is
$$\mu ({\cal F}(t))= \frac{ 2^{2t}(6 t-5)+2^t(6 t+17)+5+(-1)^t  }{9\cdot 2^{2t }+9\cdot 2^t}.$$

\end{proposition}
\begin{proof}
First we find a recurrence formula to obtain transmission
coefficient $\sigma({\cal F}(t+1))$ from $\sigma({\cal F}(t))$ using
the recursive construction   of ${\cal F}(t)$ As shown in
Fig.~\ref{selfsim}, the graph ${\cal F}(t+1)$ may be obtained by
joining at three boundary vertices ($X$, $Y$, and $Z$) two copies of
 ${\cal F}(t)$ that we will  label as $F_t^{(\eta)}$ with $\eta=1,2$.
According to this construction method, the transmission
$\sigma({\cal F}(t))$ satisfies the recursive relation
\begin{equation}\label{total02}
  \sigma({\cal F}(t+1))) = 2\sigma({\cal F}(t))+ S_t,
\end{equation}
where $S_t$ denotes the sum of distances of pairs of vertices which
are not both in the same $F_t^{(\eta)}$ subgraph.

\begin{figure}
\begin{center}
\includegraphics[width=8cm]{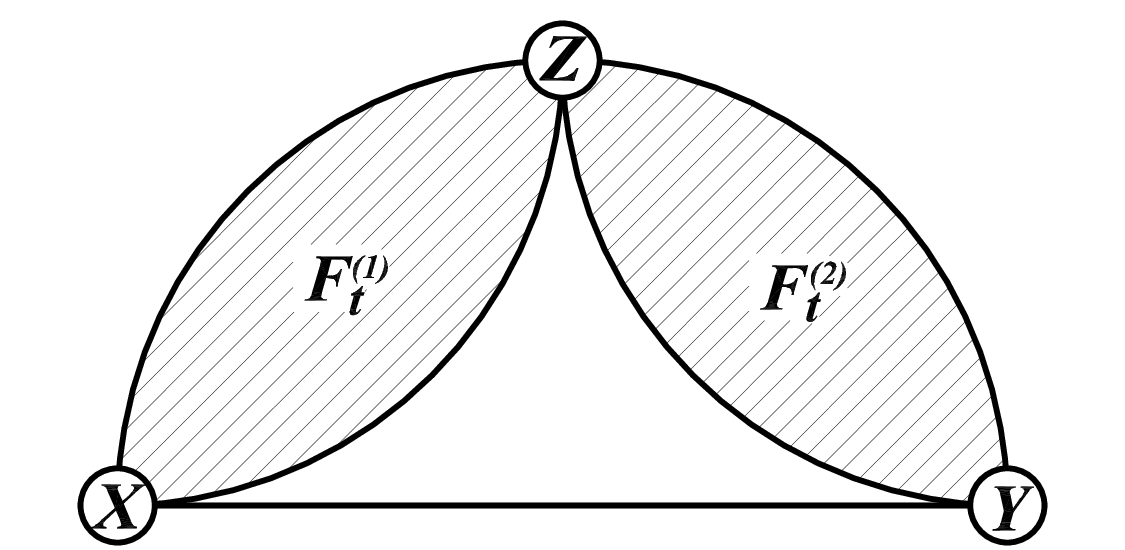}
\caption{Schematic illustration of the recursive construction of the
Farey graph. ${\cal F}(t+1)$ may be obtained by joining two copies of ${\cal F}(t)$, denoted as $F_t^{(1)}$ and $F_t^{(2)}$, which are
connected to each other as shown.} \label{selfsim}
\end{center}
\end{figure}

To calculate $S_t$, we classify the vertices in ${\cal F}(t+1)$ into
two categories: vertices $X$ and $Y$ (see figure~\ref{selfsim}) are
called connecting vertices, while the other vertices are called
interior vertices.
Thus $S_t$ 
is obtained by considering the following distances: between interior
vertices in one $F_t^{(\eta)}$ subgraph to interior vertices in the
other $F_t^{(\eta)}$ subgraph,  between a connecting vertex from one
$F_t^{(\eta)}$ subgraph  and all the interior vertices in the other
$F_t^{(\eta)}$ subgraph, and  between the two connecting vertices
$X$ and $Y$ (i.e. $d(XY)=1$).

Let us denote by $S_t^{1,2}$ the sum of  all distances between
the interior vertices, of $F_t^{(1)}$ and $F_t^{(2)}$.
Notice that $S_t^{1,2}$ does not count  paths
with endpoints at the connecting vertices $X$ and $Y$.
On the other hand, let $\Omega_t^{(\eta)}$ be the set of interior vertices in $F_{t}^{(\eta)}$.
Then the total sum $S_t$, using the symmetry
$\sum_{j \in \Omega_t^{2}}d(X,j)=\sum_{j \in \Omega_t^{1}}d(Y,j)$,
is given by
\begin{equation}\label{cross02}
S_t =S_t^{1,2} +\sum_{j \in \Omega_t^{(2)}}d(X,j)+ \sum_{j\in
\Omega_t^{(1)}}d(Y,j)+d(X,Y)=S_t^{1,2}+2\,\sum_{j \in
\Omega_t^{(2)}}d(X,j)+1.
\end{equation}
We obtain $S_t^{1,2}$ by classifying all interior vertices  of
${\cal F}(t+1)$ into three different sets according to their
distances to the two connecting vertices $X$ or $Y$. Notice that
these two vertices themselves are not included into any of the three
sets which are denoted $P_{1}$, $P_{2}$, and $P_{3}$, respectively.
This classification is shown schematically in Fig.~\ref{class}. By
construction, $d(v,X)$  and $d(v,Y)$ can differ by at most $1$ since
vertices $X$ and $Y$ are adjacent. Then the classification function
$class(v)$ of a vertex $v$ is defined as

\begin{equation}\label{classification}
class(v)=\left\{
\begin{array}{lc}
{\displaystyle{P_{1}}}  & \quad \hbox{if}\  d(v,X)< d(v,Y),\\
{\displaystyle{P_{2}}}   & \quad \hbox{if}\ d(v,X)=d(v,Y),\\
{\displaystyle{P_{3}}}   & \quad \hbox{if}\ d(v,X)>d(v,Y).\\
\end{array} \right.
\end{equation}

\begin{figure}
\begin{center}
\includegraphics[width=10cm]{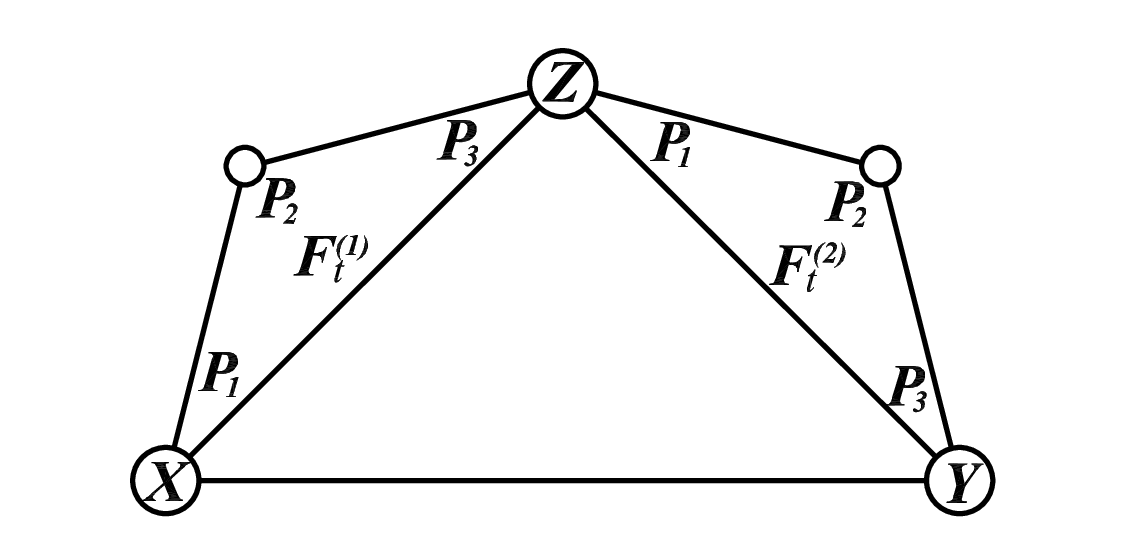}
\caption{Illustration of the vertex classification of $F_t^{(1)}$
and $F_t^{(2)}$, used to find recursively the classification of all
the interior vertices of the graph. ${\cal F}(t+1)$.}\label{class}
\end{center}
\end{figure}

This definition of vertex classification is recursive. For instance,
classes $P_{1}$ and $P_{2}$ of $F_t^{(1)}$ are in class $P_{1}$ of
$F_{t+1}^{(1)}$ and class $P_{3}$ of  $F_t^{(1)}$ is in class
$P_{2}$ of $F_{t+1}^{(1)}$. Since the two vertices $X$ and $Y$ play
a symmetrical role,  classes $P_{1}$ and $P_{3}$ are equivalent. We
denote the number of vertices in  $F_{t}$ that belong to class
$P_{i}$ ($i=1,2,3$) as $N_{t,P_{i}}$. By symmetry,
$N_{t,P_{1}}=N_{t,P_{3}}$. Therefore we have: 
\begin{equation*}
|V(t)| = N_{t,P_{1}}+N_{t,P_{2}}+N_{t,P_{3}}+2  =2\,N_{t,P_{1}} +
N_{t,P_{2}} +2.
\end{equation*}
Considering the self-similar structure of the Farey graph, we can
write the following recursive expression for $N_{t+1,P_{1}}$  and
$N_{t+1,P_{2}}$:
\begin{eqnarray*}\label{Np01}
\begin{array}{ccc}
N_{t+1,P_{1}} &=& \,N_{t,P_{1}} + \,N_{t,P_{2}}\,, \\
N_{t+1,P_{2}} &=& 2\,N_{t,P_{1}} +1\,, \\
\end{array}
\end{eqnarray*}
Together with the initial conditions $N_{1,P_{1}}=N_{1,P_{2}}=0$ we
find: 
\begin{eqnarray}\label{Np02}
\begin{array}{ccc}
N_{t,P_{1}} &=& \frac{1}{6} \left(2^{t+1}-3+(-1)^t\right), \\
N_{t,P_{2}} &=& \frac{1}{3} \left(2^t-(-1)^t\right). \\
\end{array}
\end{eqnarray}
For a vertex $v$ of ${\cal F}(t+1)$, we are also interested in the
distance from  $v$ to either of the two border connecting vertices
$X$ and $Y$ and we denote it by $\ell_v= min(d(v,X),d(v,Y))$.

Let $\ell_{t,P_{i}}  (i=1,2,3)$ denote the sum of $\ell_v$ for  all the vertices which are in class $P_{i}$ of  $F_{t}^{(\eta)}$.
Again by symmetry,  we have $\ell_{t,P_{1}}=\ell_{t,P_{3}}$, and  thus $\ell_{t,P_{1}}$, $\ell_{t,P_{2}}$
can be written recursively as follows:
\begin{eqnarray}\label{dp01}
\begin{array}{ccc}
\ell_{t+1,P_{1}} &=& \,\ell_{t,P_{1}} +\ell_{t,P_{2}}\,, \\
\ell_{t+1,P_{2}} &=& 2\,\ell_{t,P_{1}}+N_{t,P_{1}}+1\,. \\
\end{array}
\end{eqnarray}

Substituting Eq.~(\ref{Np02}) into Eq.~(\ref{dp01}), and considering
the initial conditions $\ell_{1,P_{1}}=0$ and $\ell_{1,P_{2}}=0$,
Eq.~(\ref{dp01}) can be solved and we obtain:
\begin{eqnarray}\label{dp02}
\begin{array}{ccc}
\ell_{t,P_{1}}&=& \frac{1}{81} \Big(9\cdot t\cdot(-1)^t+9\cdot t\cdot 2^t-(-1)^{-t}+2^{-t}+4 (-1)^t- \\
                      & & \qquad  -3\cdot 2^t-(-2)^{-t} (-1)^{3 t}\Big), \hfill \\
\ell_{t,P_{2}}&=& \frac{1}{81} (-2)^{-t} \Big(9 (-4)^t t+15 (-4)^t-2 (-1)^t+2 (-1)^{3t}-\\
                      & & \qquad  -2^t \left(2 (-1)^{2 t} (9 t+7)+1\right)\Big).\hfill
 \end{array}
\end{eqnarray}

After obtaining the values $N_{t,P_{i}}$ and $\ell_{t,P_{i}}$
($i=1,2,3$), we next will find $S_t^{1,2}$ and $\sum_{j \in
\Omega_t^{(2)}}d(X,j)$ expressed as a function of $N_{t,P_{i}}$ and
$\ell_{t,P_{i}}$. For convenience, we use $\Omega_t^{(\eta),i}$ to
denote the set of interior nodes belonging to class $P_i$ in
$F_{t}^{(\eta)}$. Then $S_t^{1,2}$ can be written as 
\begin{eqnarray}\label{cross04}
 S_t^{1,2} & = & \sum_{\substack{u \in \Omega_t^{(1),1},\,\,v\in F_t^{(2)}\\ v \ne Y, Z }} \! \! \! \! \! \! \! \! \! \! d(u,v) +
      \sum_{\substack{u \in \Omega_t^{(1),2},\,\,v\in  F_t^{(2)}\\ v \ne Y, Z }} \! \! \! \! \! \! \! \! \! \! d(u,v)+
      \sum_{\substack{u \in \Omega_t^{(1),3},\,\,v\in  F_t^{(2)}\\ v \ne Y, Z }} \! \! \! \! \! \! \! \! \! \! d(u,v)\,.
\end{eqnarray}
Next we calculate the first term on the right side of
Eq.~(\ref{cross04}). 
\begin{eqnarray*}\label{cross05}
 & &\sum_{\substack{u \in \Omega_t^{(1),1}, \,\,v\in F_t^{(2)}\\ v \ne Y, Z }} \! \! \! \! \! \! \! \! \! \! d(u,v)
      =  \sum_{u \in \Omega_t^{(1),1},\, v\in \Omega_t^{(2),1}\bigcup \Omega_t^{(2),2}}  \! \! \! \! \! \! \! \! \! \!  (d(u,X)+d(X,Z)+d(Z,v))\nonumber\\
      &  & + \sum_{u \in \Omega_t^{(1),1},\, v\in  \Omega_t^{(2),3}}  \! \! \! \! \! \! \! \! \! \! (d(u,X)+d(X,Y)+d(Y,v))=\nonumber \\
      \! \! \! \! \! \! \! \! \! \!   \! \! \! \! \! \! \! \! \! \!   \! \! \! \! \! \! \! \! \! \!   &=&N_{t,P_{1}}(2\ell_{t,P_{1}}+\ell_{t,P_{2}}+2N_{t,P_{1}}+N_{t,P_{2}})+\ell_{t,P_{1}}(2\,N_{t,P_{1}}+N_{t,P_{2}}).
\end{eqnarray*}
Proceeding similarly, we obtain for the second term 
\begin{eqnarray*}\label{cross06}
N_{t,P_{2}}(2\ell_{t,P_{1}}+\ell_{t,P_{2}}
+N_{t,P_{1}})+\ell_{t,P_{2}}(2\,N_{t,P_{1}} +N_{t,P_{2}}),
\end{eqnarray*}
and finally, for the third term
\begin{eqnarray*}\label{cross09}
N_{t,P_{1}}(2\ell_{t,P_{1}}+\ell_{t,P_{2}}
+N_{t,P_{1}})+\ell_{t,P_{1}}(2\,N_{t,P_{1}} +N_{t,P_{2}}).
\end{eqnarray*}
This leads to
\begin{equation}\label{cross10}
  S_t^{1,2}
  =2(2\ell_{t,P_{1}}+\ell_{t,P_{2}})(2\,N_{t,P_{1}}+N_{t,P_{2}})+N_{t,P_{1}}(3N_{t,P_{1}}+2N_{t,P_{2}}).
\end{equation}
Analogously, we find
\begin{equation}\label{cross13}
 \sum_{j \in \Omega_t^{2}}d(X,j)
  =(2\ell_{t,P_{1}}+\ell_{t,P_{2}})+(2\,N_{t,P_{1}}+N_{t,P_{2}}).
\end{equation}

Substituting Eqs.~(\ref{cross10}) and~(\ref{cross13}) into
Eq.~(\ref{cross02}) and using Eq.~(\ref{dp02}) we have the final
expression for $S_t$,
\begin{eqnarray}\label{cross14}
S_t =\frac{1}{18} \left[-5-3(-1)^t+12\cdot 2^{t}+14\cdot
2^{2t}+12\,t\cdot \,4^{t}\right].
\end{eqnarray}

With these results and recursion relations, we finally find  the
transmission coefficient. Inserting Eq.~(\ref{cross14}) into
Eq.~(\ref{total02}) and using the initial condition $\sigma({\cal
F}(0)) =1$, Eq.~(\ref{total02}) can be solved inductively,

\begin{eqnarray}\label{total04}
\sigma({\cal F}(t)) =\frac{1}{18} \left[5+(-1)^t + (6 t+17)2^t + (6
t-5) 4^{t} \right]
\end{eqnarray}
which together with the graph order leads to the stated result:
$$\mu ({\cal F}(t))= \frac{ 2^{2t}(6 t-5)+2^t(6 t+17)+5+(-1)^t  }{9\cdot 2^{2t }+9\cdot 2^t}.$$

%
\end{proof}

Notice that for a large iteration step $t$ $\mu({\cal F}(t)) \sim t
\sim \ln|V_{t}|$, which shows a logarithmic scaling of  the average
distance with the order of the graph. This logarithmic scaling of
$\mu({\cal F}(t))$ with the graph order, together with the large
clustering coefficient $c({\cal F}(t))=\ln 2$ obtained in the
preceding section, shows that the Farey graph has small-world
characteristics.

\section{Conclusion}

We have introduced and studied a family of Farey graphs, based on
Farey sequences, which are minimally 3-colorable, uniquely
Hamiltonian, maximally outerplanar, perfect, modular, have an
exponential degree hierarchy, and are also small-world.
A combination  of modularity, exponential degree distribution, and
small-world properties can be found in real networks like some social
and technical networks, in particular some electronic circuits, and
those related to several biological systems (metabolic networks,
protein interactome, etc)~\cite{Ne03,FeJaSo01}.

On the other hand, the graphs are outerplanar and many algorithms
that are NP-complete for general graphs run in polynomial time in
outerplanar graphs~\cite{BrLeSp99}. This should help to find
efficient algorithms for graph and network dynamical processes
(communication, hub location, routing, synchronization, etc).

Finally, another interesting property of this graph family is its
deterministic character which should facilitate the exact
calculation of some other graph parameters and invariants and  the
development of algorithms.


\end{document}